\newtheorem*{lemma}{Lemma}
\newtheorem{proposition}{Proposition}
\newtheorem*{asm-SPT}{Assumption Scalar-PT}
\newtheorem*{asm-IPT}{Assumption Interval-PT}
\newtheorem*{asm-PS}{Assumption PS}
\newtheorem*{asm-CPS}{Assumption Conditional-PS}
\theoremstyle{definition}
\newtheorem*{cond}{Condition}
\def\aE#1{\mathbb{E}_{\mathrm{A}}\left[#1\right]}
\def\E#1{\mathbb{E}\left[#1\right]}
\def\P#1{\mathbb{P}\left[#1\right]}
\renewcommand{\hat}{\widehat}
\def\bco{\iffalse} 
\begin{document}

\title[DID with Interval Data]{Difference-in-Differences with Interval Data}
\thanks{} 

\author[D. Kurisu]{Daisuke Kurisu}
\author[Y. Okamoto]{Yuta Okamoto}
\author[T. Otsu]{Taisuke Otsu}

\date{First version: \today}

\address[D. Kurisu]{Center for Spatial Information Science, The University of Tokyo, 5-1-5, Kashiwanoha, Kashiwa-shi, Chiba 277-8568, Japan.}
\email{daisukekurisu@csis.u-tokyo.ac.jp}

\address[Y. Okamoto]{Graduate School of Economics, Kyoto University, Yoshida Honmachi, Sakyo, Kyoto 606-8501, Japan.
}
\email{okamoto.yuuta.57w@st.kyoto-u.ac.jp}

\address[T. Otsu]{Department of Economics, London School of Economics, Houghton Street, London, WC2A 2AE, UK.
}
\email{t.otsu@lse.ac.uk}

\begin{abstract}
Difference-in-differences (DID) is one of the most popular tools used to evaluate causal effects of policy interventions. This paper extends the DID methodology to accommodate interval outcomes, which are often encountered in empirical studies using survey or administrative data. We point out that a naive application or extension of the conventional parallel trends assumption may yield uninformative or counterintuitive results, and present a suitable identification strategy, called parallel shifts, which exhibits desirable properties. Practical attractiveness of the proposed method is illustrated by revisiting an influential minimum wage study by Card and Krueger (1994).
\end{abstract}

\maketitle

\section{Introduction}\label{sec:intro}
Difference-in-differences (DID) is widely applied in empirical research to study the causal effect of a policy intervention or unexpected event by exploiting observational panel data. Under the parallel trends assumption (i.e., average outcomes of control and treated groups in the absence of an intervention obey the same trend), DID identifies the average treatment effect on the treated (ATT) by a contrast of average outcomes of those groups for pre- and post-intervention periods. Since the seminal works by \cite{Ashenfelter_Card:1985} and \cite{Card_Krueger:1994}, the DID methodology has been extended to various directions, such as heterogeneous treatment effects \citep{Athey_Imbens:2006}, staggered treatment adoption \citep{CS21}, robustness of the parallel trends assumption for functional forms \citep{Roth_SantAnna:2023}, among others. We refer to \cite{Roth_etal:2023} for an overview of DID methods.

Although these extensions primarily focus on the scalar-outcome case, many outcomes in survey or administrative data are often reported in interval form, such as household incomes, wages, hours worked, and time spent in specific activities. 
Representative examples include income measures in the U.S. Health and Retirement Study, the United Kingdom's Living Costs and Food Survey, and the Australian Census, as well as the hours-worked measure in the UK Census.
Even when the dataset is seemingly scalar-valued, respondents may report rounded values in surveys (e.g., \citealp{Eissa_Liebman:1996}), in which case it may be more appropriate to treat the outcome as interval data. Although empirical methods on interval data have been extensively studied in the literature of partial identification (see, e.g., \cite{BeMoMo12}, \cite{Manski:2003}, and \cite{Morinari:2020} for overviews), to the best of our knowledge, the literature lacks a guideline for empirical research on how to conduct causal inference for DID designs with interval outcomes.

In this paper, we extend the scope of the DID methodology to accommodate interval-valued outcomes. 
In particular, we study several (partial) identification strategies for the ATT under DID designs when researchers observe only interval-valued outcomes. We first directly apply the conventional parallel trends assumption to the unobservable scalar outcome, and illustrate that the implied sharp identified set for the ATT can be uninformative or counterintuitive via an influential example by \cite{Card_Krueger:1994}. We next investigate an extended notion of the parallel trends assumption, in which the same mapping that transports the bounds of the untreated potential mean from the pre-treatment to the post-treatment period is applied to both the control and treatment groups. Although this seems a natural extension of the conventional parallel trends, this identification strategy also does not work well. For the \cite{Card_Krueger:1994} example, the (estimated) sharp identified set yields an increasing trend in the lower bound for the treated group, even though the lower bound for the control group exhibits a decreasing trend. The first message of this paper is that researchers need to be cautious of a naive application or extension of the parallel trends assumption to interval outcomes, which may yield uninformative or counterintuitive results.

As the second message, we propose a new identification strategy, called parallel shifts, by exploiting an alternative interpretation of the conventional parallel trends assumption, i.e., the shift in the untreated potential mean from the control to the treatment group is identical over time. Our parallel shifts assumption is obtained by applying this interpretation to the bounds of the potential outcomes. We present some desirable properties of the sharp identified set implied by the parallel shifts, which are not shared by alternative identification strategies, and demonstrate its sensible finite sample feature by the \cite{Card_Krueger:1994} example.

This paper contributes to the literature on DID and partial identification. Since this literature is vast, we refer to the surveys cited above and references therein. At the intersection of these literatures, \cite{Manski_Pepper:2018} and \cite{Rambachan_Roth:2023} introduced partial identification analysis for DID when the parallel trends assumption might be violated. Relatedly, \cite{Zhou_etal:2025gdid} extended DID analysis to the case where outcomes are situated in a general metric space. However, these papers do not consider interval-valued outcomes.

This paper is organized as follows. After closing this section with our running example on \cite{Card_Krueger:1994}, Section \ref{sec:setup} discusses two potential identification strategies (scalar and interval parallel trends) and their issues. In Section \ref{sec:shift}, we present the parallel shifts assumption and discuss its desirable properties and implementation. Section \ref{sec:CK} revisits the \cite{Card_Krueger:1994} example and illustrates our main results. All proofs are contained in the appendix.

\subsection*{Running example: \cite{Card_Krueger:1994}}
We use the \cite{Card_Krueger:1994} minimum wage study as a running example throughout the paper. In this early and influential study, the authors examine the effect of minimum wage increases on employment. Specifically, they exploit the minimum wage increase in New Jersey as a natural experiment and compare employment before and after the policy change between New Jersey (NJ) and Pennsylvania (PA) using the DID approach.

Their main outcome variable is the number of full-time-equivalent (FTE) employees in chain fast-food restaurants across the two states. The FTE employment is defined as the number of full-time workers plus the number of store managers plus 0.5 times the number of part-time workers \citep[p.~775]{Card_Krueger:1994}. The underlying employment data were collected by a telephone survey.

Although the original study treats this measure as a scalar outcome, there are several reasons why it should instead be viewed as interval-valued data. First, the dataset contains decimal-valued entries (e.g., $6.5$) even before converting the raw numbers into FTE employees. This suggests that the underlying number of workers was uncertain between 6 and 7, or $[6,7]$.
Furthermore, the reported numbers are likely subject to rounding. 
As graphical evidence, Figure \ref{fig: rounding} shows histograms of full-time and part-time employees in the pre-treatment period. Both histograms exhibit pronounced heaping at multiples of five above ten (i.e., $10, 15, 20,\ldots$). Therefore, these values may be better interpreted as interval-valued. In the subsequent analysis, we regard an observed count $k(\geq10)$ that is a multiple of five as representing the interval $[k-5, k+5]$ (e.g., $15$ as $[10, 20]$). All other integer-valued observations are retained in their original scalar form. In the remainder of this paper, we treat these recoded observations as interval-valued data and reanalyze the effect of the minimum wage on employment.
\begin{figure}[t]
    \centering
    \begin{subfigure}[b]{0.475\textwidth}
        \centering
        \includegraphics[width=\linewidth]{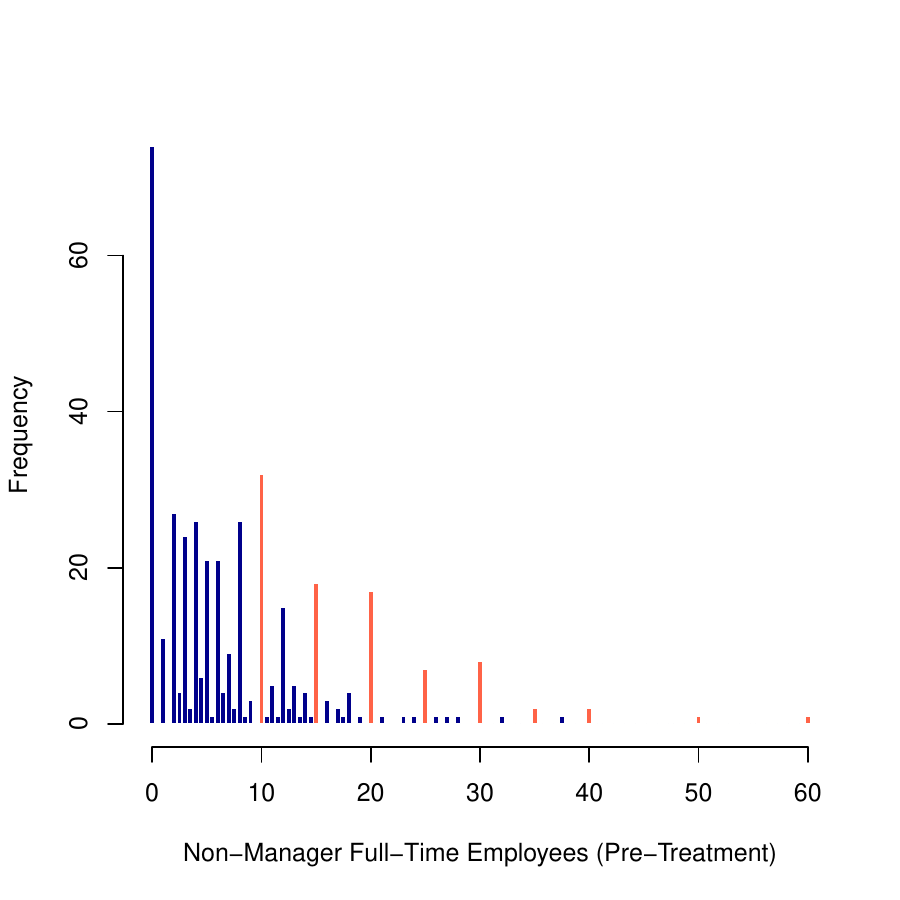}
        \caption{}
        \label{fig: rounding ft}
    \end{subfigure}
    \hfill
    \begin{subfigure}[b]{0.475\textwidth}
        \centering
        \includegraphics[width=\linewidth]{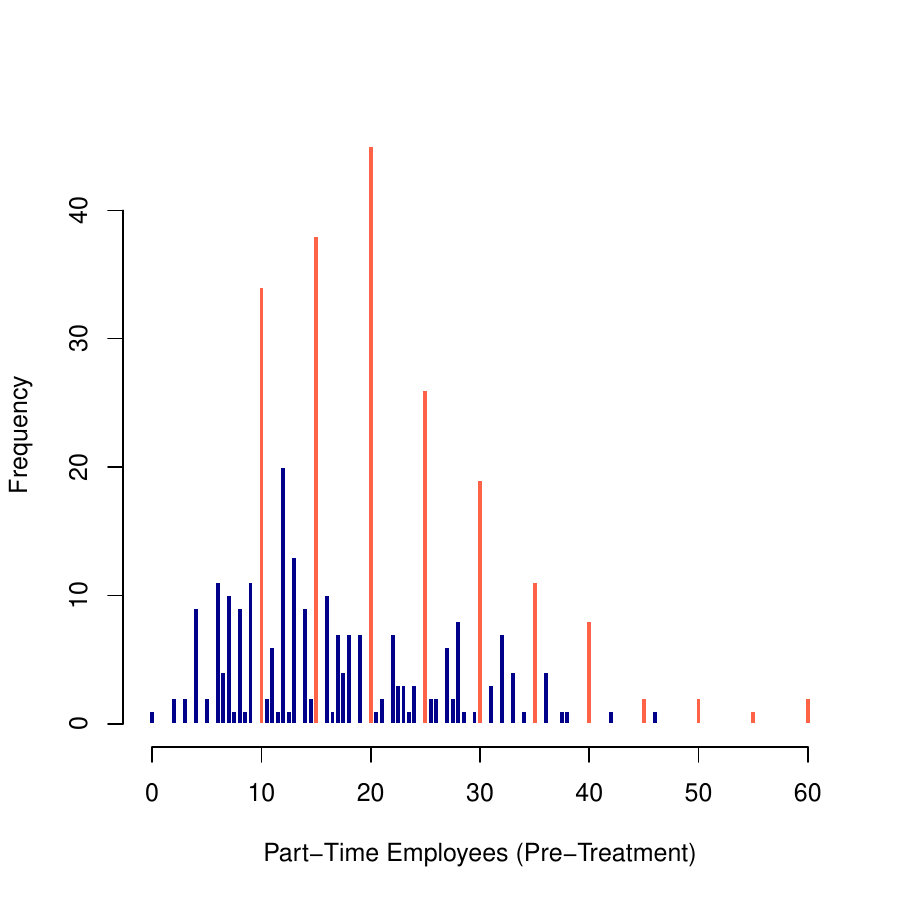}
        \caption{}
        \label{fig: rounding pt}
    \end{subfigure}
    \caption{Evidence of rounding in the \cite{Card_Krueger:1994} data}
    \label{fig: rounding}
\end{figure}

\section{DID with interval data: Potential strategies and their issues}\label{sec:setup}
This paper considers the canonical DID setting with two periods. We observe panel data $(D_{it},\bm{Y}_{it})$ for $i=1,\ldots,n$ and $t=1,2$, where $D_{it}$ is a treatment indicator taking one if $i$ is treated in period $t$, and zero otherwise. Suppose no units are treated in period $t=1$ and some (but not all) units become treated in period $t=2$. We write $D_i=D_{i2}$ hereafter for notational ease. Let $Y_{it}$ denote the scalar outcome variable (i.e., the actual number of employees at restaurant $i$ in period $t$). In contrast to the conventional settings, researchers cannot observe $Y_{it}$. Instead, they observe the interval-valued outcome $\bm{Y}_{it} = [Y_{it}^\ell, Y_{it}^u]$ containing $Y_{it}$ with probability one, and wish to identify causal effects of the treatment $D_i$. 

Let $Y_{it}(d)$ be the usual potential outcome given $D_{i}=d$, and $\bm{Y}_{it}(d)$ be the potential interval containing $Y_{it}(d)$ with probability one. Then we assume
\begin{equation}\label{eq:Y}
\begin{aligned}
    Y_{i1} & = Y_{i1}(0),\qquad
    Y_{i2} = D_{i}Y_{i2}(1) + (1-D_{i}) Y_{i2}(0),  \\
    \bm{Y}_{i1} & = \bm{Y}_{i1}(0),\qquad
    \bm{Y}_{i2} = D_{i}\bm{Y}_{i2}(1) + (1-D_{i}) \bm{Y}_{i2}(0),
\end{aligned}
\end{equation}
wherein the so-called ``no anticipation'' assumption is implicitly imposed. Another implicit assumption here is that the conditional on $D_i$, the way the scalar outcome is encoded into the interval data is invariant for both potential and observable outcomes. We are interested in the average treatment effect on the treated (ATT):
\begin{align*}
    \theta_{\mathtt{ATT}} = \E{Y_{i2}(1)-Y_{i2}(0) \mid D_i=1} = \E{Y_{i2} \mid D_i=1} - \E{Y_{i2}(0) \mid D_i=1},
\end{align*}
where the second equality uses \eqref{eq:Y}. In contrast to the scalar outcome case, even the first term of the right-hand side is not identified.

However, we can still partially identify the first term. Since $Y_{i2}\in\bm{Y}_{i2}$, the sharp identified set for $\E{Y_{i2} \mid D_i=1}$ is given by 
\begin{align}
   \aE{\bm{Y}_{i2} \mid D_i=1} \coloneqq \left[\mathbb{E}[Y_{i2}^\ell \mid D_i=1], \E{Y_{i2}^u \mid D_i=1}\right].\label{eq: bounds on the identified term}
\end{align}
The sharpness follows by considering the cases with $Y_{i2} = \alpha Y_{i2}^\ell + (1-\alpha)Y_{i2}^u$ for $\alpha\in[0,1]$, where the lower and upper bounds are attained when $\alpha=1$ and $0$, respectively.

The interval $\aE{\bm{Y}_{i2} \mid D_i=1}$ is called the conditional Aumann mean of the interval-valued random variable $\bm{Y}_{i2}$.\footnote{Formally, the conditional Aumann mean of an interval-valued random variable $\bm{Y}$ given $D$ is defined as
\begin{align*}
    \aE{\bm{Y} \mid D} = \{\E{Y\mid D} : Y \in L^1 \cap M(\bm{Y})\},
\end{align*}
where $L^1$ is the space of scalar random variables with finite absolute mean and $M(\bm{Y})$ is the set of all measurable points of $\bm{Y}$.} The notion of the Anmann mean is commonly applied in the literature of partial identification (see \cite{BeMo08} and \citet[Chapter 3]{MoMo18} for further details), and we also employ this notion to simplify the presentation. For interval-valued data, the Aumann mean is simply the interval given by the conventional means of the lower and upper bounds.

With this notion of the Aumann mean, we can also obtain the partial identification of the related objects for the control group, i.e., $\E{Y_{it} \mid D_i=0}\in\aE{\bm{Y}_{it}\mid D_i=0}$ for $t=1,2$. The main challenge to partially identify the ATT is to bound the counterfactual mean, $\E{Y_{i2}(0) \mid D_i=1}$. To this end, we begin by considering two seemingly natural but potentially unattractive identification strategies. These considerations motivate our proposal, developed in Section \ref{sec:shift}, as an empirically appealing alternative.

\subsection{Strategy 1: Scalar parallel trends}
The most direct approach to bound the counterfactual mean, $\E{Y_{i2}(0) \mid D_i=1}$, is to rely upon the standard parallel trends assumption (e.g., \citealp[Assumption 1]{Roth_etal:2023}) on the scalar potential outcome $Y_{i2}(0)$, that is
\begin{asm-SPT}\label{assumption: standard parallel trend}
    $\E{Y_{i2}(0) - Y_{i1}(0)\mid D_i=1} = \E{Y_{i2}(0) - Y_{i1}(0)\mid D_i=0}$.
\end{asm-SPT}
This is the conventional parallel trends assumption when the scalar outcome is observable. In this case, the sharp identified set of the counterfactual mean and ATT are obtained as follows. Let $\bm{A}\ominus \bm{B} = \{a-b: a\in \bm{A}\,\text{ and }\, b\in \bm{B}\}$ be the Minkowski difference for intervals $\bm{A}$ and $\bm{B}$.
\begin{proposition}[Sharp identified set under scalar parallel trends]\label{prop: naive bounds}
Consider the setup of this section. Under Assumption Scalar-PT, the sharp identified set of the counterfactual mean $\E{Y_{i2}(0) \mid D_i=1}$ is given by
    \begin{align*}
        \mathcal{M}_{\mathtt{SPT}}& = \bigg[
        \mathbb{E}[Y_{i1}^\ell\mid D_i=1] + \left(\mathbb{E}[Y_{i2}^\ell\mid D_i=0] - \mathbb{E}[Y_{i1}^u\mid D_i=0]\right),\\
        &\qquad\quad 
        \mathbb{E}[Y_{i1}^u\mid D_i=1] + \left(\mathbb{E}[Y_{i2}^u\mid D_i=0] - \mathbb{E}[Y_{i1}^\ell\mid D_i=0]\right)
        \bigg].
    \end{align*}
Furthermore, the sharp identified set of the ATT $\theta_{\mathtt{ATT}}$ is given by
    \begin{align*}
        \Theta_{\mathtt{SPT}} = \aE{\bm{Y}_{i2} \mid D_i=1} \ominus \mathcal{M}_{\mathtt{SPT}}.
    \end{align*}
\end{proposition}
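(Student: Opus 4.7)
The plan is to use Assumption Scalar-PT to rewrite the counterfactual mean $\E{Y_{i2}(0) \mid D_i=1}$ as a signed sum of three conditional expectations of observable scalars, and then control each summand by its conditional Aumann mean. Specifically, combining Scalar-PT with $Y_{i1}(0)=Y_{i1}$ and $Y_{i2}(0)=Y_{i2}$ on $\{D_i=0\}$ from \eqref{eq:Y} yields
\[
\E{Y_{i2}(0) \mid D_i=1} = \E{Y_{i1} \mid D_i=1} + \E{Y_{i2} \mid D_i=0} - \E{Y_{i1} \mid D_i=0}.
\]
Each term on the right involves an observable scalar $Y_{it}$ almost surely contained in $\bm{Y}_{it}$, so the definition of the conditional Aumann mean places it in $\aE{\bm{Y}_{it} \mid D_i=d}$ for the relevant $(t,d)$; minimizing the two positive contributions and maximizing the negative one (and vice versa) produces the outer bound $\mathcal{M}_{\mathtt{SPT}}$.

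For sharpness, I would fix an arbitrary $m \in \mathcal{M}_{\mathtt{SPT}}$ and construct a joint law of $\{(\bm{Y}_{it}(d), D_i)\}$ that reproduces the observed distribution, satisfies Scalar-PT, and yields counterfactual mean $m$. Writing each observable conditional mean as $\alpha_j \E{Y_{it}^\ell \mid D_i=d} + (1-\alpha_j)\E{Y_{it}^u \mid D_i=d}$ for $\alpha_j \in [0,1]$ with $j=1,2,3$, the target identity becomes a continuous affine equation in $(\alpha_1,\alpha_2,\alpha_3)$ whose image over $[0,1]^3$ is exactly $\mathcal{M}_{\mathtt{SPT}}$, so the intermediate value theorem supplies admissible coefficients. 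The resulting constant selections $Y_{it} = \alpha_j Y_{it}^\ell + (1-\alpha_j) Y_{it}^u$ concern three mutually exclusive subpopulations indexed by $(t,d) \in \{(1,1),(2,0),(1,0)\}$ and hence can be imposed independently. The counterfactual $Y_{i2}(0)$ on $\{D_i=1\}$ is never observed, so I would take it to be the constant $m$ with $\bm{Y}_{i2}(0) = [m,m]$, which makes Scalar-PT hold by construction.

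For the ATT, write $\theta_{\mathtt{ATT}} = \E{Y_{i2} \mid D_i=1} - \E{Y_{i2}(0) \mid D_i=1}$. The first term has sharp identified set $\aE{\bm{Y}_{i2} \mid D_i=1}$ by \eqref{eq: bounds on the identified term}, and the second has sharp identified set $\mathcal{M}_{\mathtt{SPT}}$ by the first part, so $\theta_{\mathtt{ATT}}$ is contained in the Minkowski difference $\aE{\bm{Y}_{i2} \mid D_i=1} \ominus \mathcal{M}_{\mathtt{SPT}}$. I expect the main obstacle to be sharpness of this difference: one must show that on the treated subsample the observed $Y_{i2}=Y_{i2}(1)$ and the unobserved counterfactual $Y_{i2}(0)$ are distinct random variables whose joint law is unconstrained by any observable (Scalar-PT only ties the mean of $Y_{i2}(0)$ on $\{D_i=1\}$ to observables from $\{D_i=0\}$ and from $Y_{i1}$ on $\{D_i=1\}$, not to $Y_{i2}$ on $\{D_i=1\}$), so the two conditional means can be selected jointly yet independently. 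This allows the constructions for the counterfactual-mean and observed-mean parts to be combined on $\{D_i=1\}$ without conflict, yielding attainability of every point in the Minkowski difference.
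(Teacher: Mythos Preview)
Your proposal is correct and follows essentially the same route as the paper: the outer bound comes from the same Scalar-PT decomposition plus termwise Aumann-mean bounds, and sharpness is obtained by choosing measurable selections inside the observed intervals together with an unconstrained counterfactual $Y_{i2}(0)$ on $\{D_i=1\}$, then combining with an independently chosen $Y_{i2}(1)$ for the ATT. The only cosmetic difference is that the paper first exhibits the endpoint DGPs (e.g., $Y_{i1}(0)=Y_{i1}^u$, $Y_{i2}(0)=Y_{i2}^\ell$ on $\{D_i=0\}$) and then takes convex mixtures, whereas you parametrize directly via $(\alpha_1,\alpha_2,\alpha_3)\in[0,1]^3$; these are equivalent.
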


When the scalar outcome is observable (i.e., $Y^\ell_{it}=Y^u_{it}$), $\Theta_{\mathtt{SPT}}$ reduces to the conventional DID formula. Although this identification strategy is fully compatible with the standard assumption, the resulting identified set is generally not very informative, as it tends to be excessively wide. The reason for these wide bounds is that the lower bound of $\mathcal{M}_{\mathtt{SPT}}$ is constructed by considering an extremely unfavorable scenario in which the lower bound for the treated group (NJ) follows a trend connecting the upper bound for the control group (PA) in period $t=1$ to the lower bound for PA in period $t=2$. The upper bound of $\mathcal{M}_{\mathtt{SPT}}$ tends to be large for a similar reason.

Figure \ref{fig: naive bounds} illustrates this point using the \cite{Card_Krueger:1994} study. We can see that $\mathcal{M}_{\mathtt{SPT}}$ is extremely wide. Moreover, this approach fails to retain important features of the control group, such as the fact that the identified set shrinks over time and that both the upper and lower bounds shift downward.
\begin{figure}[t]
    \centering
    \begin{subfigure}[b]{0.475\textwidth}
        \centering
        \includegraphics[width=\linewidth]{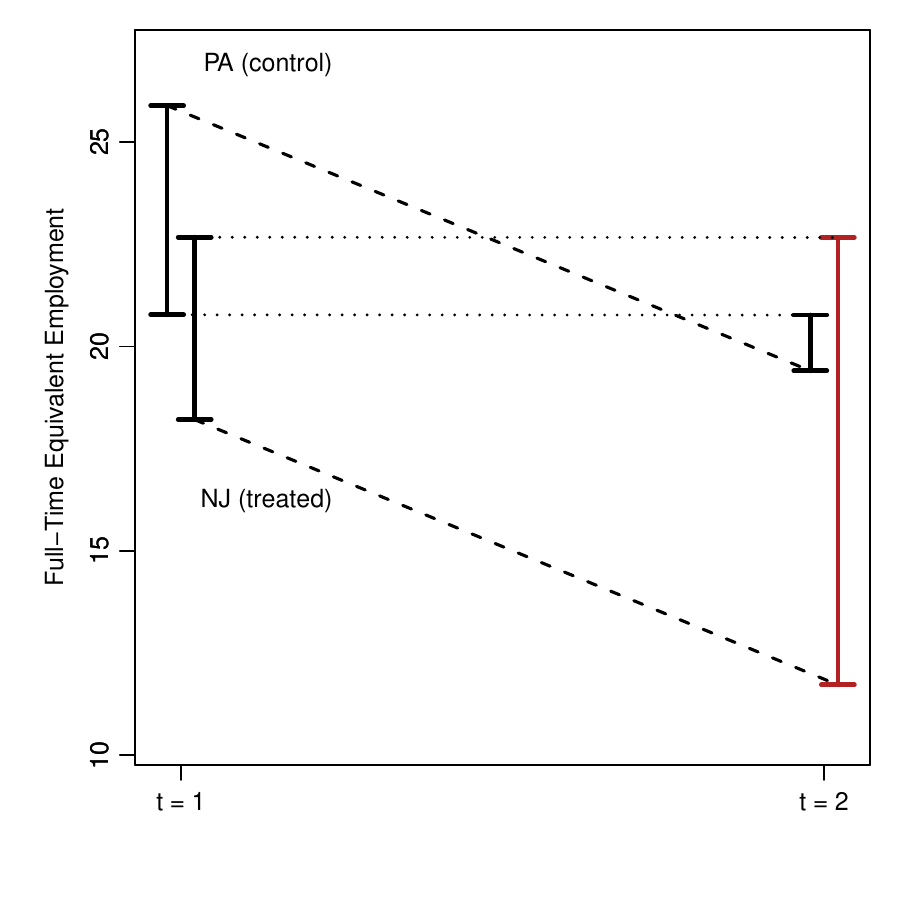}
        \caption{$\mathcal{M}_{\mathtt{SPT}}$}
        \label{fig: naive bounds}
    \end{subfigure}
    \hfill
    \begin{subfigure}[b]{0.475\textwidth}
        \centering
        \includegraphics[width=\linewidth]{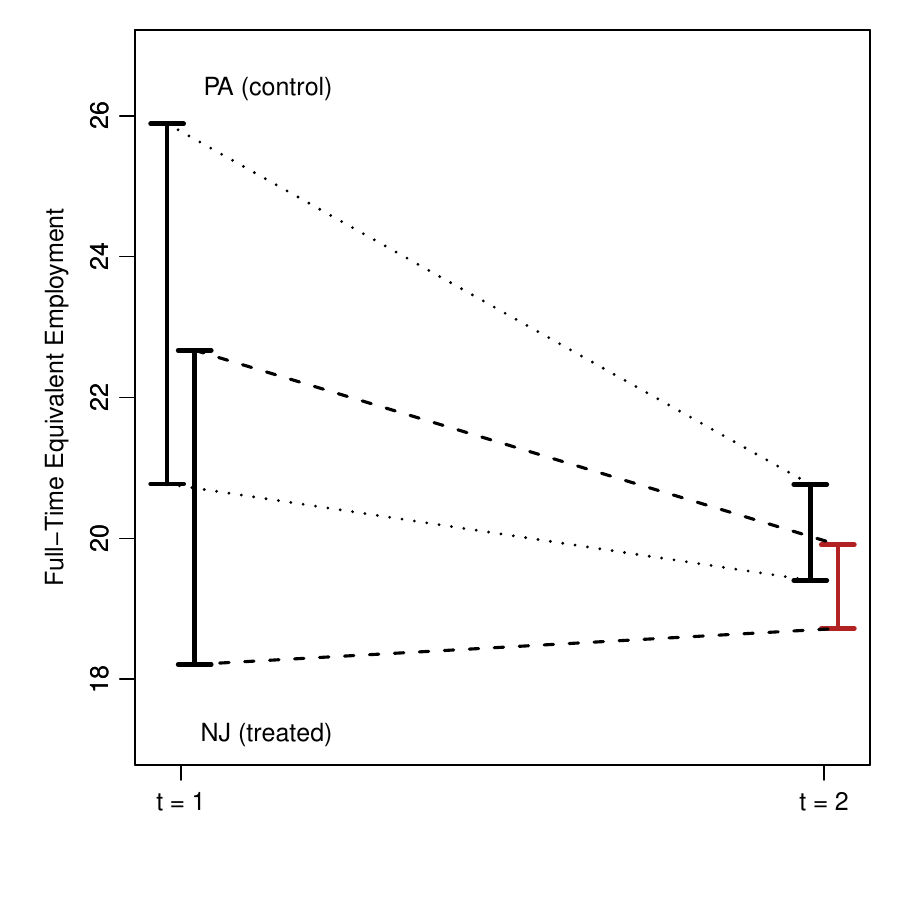}
        \caption{$\mathcal{M}_{\mathtt{IPT}}$}
        \label{fig: PT bounds}
    \end{subfigure}
    \caption{Empirical bounds using the \cite{Card_Krueger:1994} data}
    \label{fig: bounds}
\end{figure}

\subsection{Strategy 2: Interval parallel trends}
The previous analysis motivates us to consider another notion of parallel trends for interval-valued outcomes. To this aim, we begin with the scalar parallel trends assumption (Assumption Scalar-PT). This assumption can be interpreted as stating that there exists a mapping that transports $\E{Y_{i1}(0)\mid D_i=0}$ to $\E{Y_{i2}(0)\mid D_i=0}$ and that the same mapping also transports $\E{Y_{i1}(0)\mid D_i=1}$ to the counterfactual mean $\E{Y_{i2}(0)\mid D_i=1}$.

This idea can be extended to the interval-data case. 
Consider the mapping that takes the lower bound $\E{Y_{i1}^\ell(0)\mid D_i=0}$ to $\E{Y_{i2}^\ell(0)\mid D_i=0}$ and the upper bound $\E{Y_{i1}^u(0)\mid D_i=0}$ to $\E{Y_{i2}^u(0)\mid D_i=0}$. There is a unique linear map $T(\cdot)$ achieving this mapping, which is given by
\begin{align}
    T(x) =
    \frac{\mathbb{E}[Y_{i2}^u\mid D_i=0] - \mathbb{E}[Y_{i2}^\ell\mid D_i=0]}{\mathbb{E}[Y_{i1}^u\mid D_i=0] - \mathbb{E}[Y_{i1}^\ell\mid D_i=0]}\left(x-\mathbb{E}[Y_{i1}^\ell\mid D_i=0]\right) + \mathbb{E}[Y_{i2}^\ell\mid D_i=0].\label{eq: PT map}
\end{align}
Let $T([a,b])=[T(a),T(b)]$ be the mapping of an interval $[a,b]$ by $T$. Then we can see that
\begin{align*}
    \aE{\bm{Y}_{i2}(0) \mid D_i = 0} = T\left(\aE{\bm{Y}_{i1}(0) \mid D_i = 0}\right),
\end{align*}
i.e., $T$ transports the potential Aumann mean $\aE{\bm{Y}_{i1}(0) \mid D_i = 0}$ at $t=1$ to its counterpart at $t=2$, $\aE{\bm{Y}_{i2}(0) \mid D_i = 0}$. Therefore, by applying the same transport for the conditional Aumann means given $D_i=1$, the parallel trends assumption could be extended to the interval-valued outcome case as follows.
\begin{asm-IPT}
    Let $T$ be the map defined in \eqref{eq: PT map}.
    It holds that
    \begin{align*}
        \aE{\bm{Y}_{i2}(0) \mid D_i = 1} = 
        T\left(\aE{\bm{Y}_{i1}(0) \mid D_i = 1}\right).
    \end{align*}
\end{asm-IPT}
Under this assumption, the sharp identified set of the counterfactual mean and ATT are characterized as follows.
\begin{proposition}[Sharp identified set under interval parallel trends]\label{prop: PT bounds}
Consider the setup of this section. Suppose Assumption Interval-PT holds true and $\P{Y_{i1}^u> Y_{i1}^\ell\mid D_i=0}>0$. Then the sharp identified set of $\E{Y_{i2}(0) \mid D_i=1}$ is given by
    \begin{align*}
        \mathcal{M}_{\mathtt{IPT}}&=
        T\left(\aE{\bm{Y}_{i1} \mid D_i = 1}\right).
    \end{align*}
Furthermore, the sharp identified set of the ATT $\theta_{\mathtt{ATT}}$ is given by
    \begin{align*}
        \Theta_{\mathtt{IPT}} = \aE{\bm{Y}_{i2} \mid D_i = 1} \ominus \mathcal{M}_{\mathtt{IPT}}.
    \end{align*}
\end{proposition}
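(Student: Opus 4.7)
The plan is to prove Proposition~\ref{prop: PT bounds} in two stages. First, I would establish that the counterfactual Aumann mean $\aE{\bm{Y}_{i2}(0)\mid D_i=1}$ is point-identified and coincides with $\mathcal{M}_{\mathtt{IPT}}$. Second, I would verify that the sharp identified set for the scalar $\E{Y_{i2}(0)\mid D_i=1}$ is precisely this interval. The statement on $\theta_{\mathtt{ATT}}$ then follows from a variation-independence argument combined with the definition of the Minkowski difference.

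For the first stage, I would start from \eqref{eq:Y}, which gives $\bm{Y}_{i1}(0)=\bm{Y}_{i1}$ almost surely and hence $\aE{\bm{Y}_{i1}(0)\mid D_i=1}=\aE{\bm{Y}_{i1}\mid D_i=1}$. The hypothesis $\P{Y_{i1}^u>Y_{i1}^\ell\mid D_i=0}>0$ ensures $\E{Y_{i1}^u\mid D_i=0}>\E{Y_{i1}^\ell\mid D_i=0}$, so the denominator in \eqref{eq: PT map} is strictly positive and $T$ is a well-defined, strictly increasing affine map. Invoking Assumption Interval-PT then yields
\[
\aE{\bm{Y}_{i2}(0)\mid D_i=1}=T\!\left(\aE{\bm{Y}_{i1}(0)\mid D_i=1}\right)=T\!\left(\aE{\bm{Y}_{i1}\mid D_i=1}\right)=\mathcal{M}_{\mathtt{IPT}}.
\]

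For the second stage, containment is immediate: since $Y_{i2}(0)\in\bm{Y}_{i2}(0)$ almost surely, the definition of the Aumann mean gives $\E{Y_{i2}(0)\mid D_i=1}\in\aE{\bm{Y}_{i2}(0)\mid D_i=1}=\mathcal{M}_{\mathtt{IPT}}$. For the reverse (sharpness), given any $\mu\in\mathcal{M}_{\mathtt{IPT}}$ I would write $\mu=\alpha\,T(\mathbb{E}[Y_{i1}^\ell\mid D_i=1])+(1-\alpha)\,T(\mathbb{E}[Y_{i1}^u\mid D_i=1])$ for some $\alpha\in[0,1]$, and construct a latent joint distribution in which, on $\{D_i=1\}$, $\bm{Y}_{i2}(0)$ is almost surely equal to the deterministic interval $[T(\mathbb{E}[Y_{i1}^\ell\mid D_i=1]),\,T(\mathbb{E}[Y_{i1}^u\mid D_i=1])]$, together with the constant selection $Y_{i2}(0)=\alpha\,T(\mathbb{E}[Y_{i1}^\ell\mid D_i=1])+(1-\alpha)\,T(\mathbb{E}[Y_{i1}^u\mid D_i=1])$. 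This construction preserves the observed joint distribution (since $\bm{Y}_{i2}(0)$ is latent on the treated), its Aumann mean equals $\mathcal{M}_{\mathtt{IPT}}$ so Assumption Interval-PT continues to hold, and $\E{Y_{i2}(0)\mid D_i=1}=\mu$ by design.

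For the ATT statement, the sharp identified set for $\E{Y_{i2}\mid D_i=1}$ is $\aE{\bm{Y}_{i2}\mid D_i=1}$ by \eqref{eq: bounds on the identified term}. Since this set refers to the factual outcome of treated units while $\mathcal{M}_{\mathtt{IPT}}$ refers to their latent counterfactual, the two are variation-independent: any pair of values in their Cartesian product can be realized simultaneously by coupling a selection of $\bm{Y}_{i2}=\bm{Y}_{i2}(1)$ on $\{D_i=1\}$ with the latent construction above. The sharp set for the difference $\theta_{\mathtt{ATT}}=\E{Y_{i2}\mid D_i=1}-\E{Y_{i2}(0)\mid D_i=1}$ is therefore the Minkowski difference $\aE{\bm{Y}_{i2}\mid D_i=1}\ominus\mathcal{M}_{\mathtt{IPT}}=\Theta_{\mathtt{IPT}}$. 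The main obstacle is the sharpness construction in the second stage: one must explicitly exhibit a latent DGP that simultaneously matches the observables, satisfies Assumption Interval-PT, and delivers the prescribed counterfactual mean. The freedom afforded by the fact that $\bm{Y}_{i2}(0)$ is never observed on the treated units is what makes this construction go through.
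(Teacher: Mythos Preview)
Your proof is correct and follows essentially the same strategy as the paper: verify that $T$ is well-defined, derive validity directly from Assumption Interval-PT, exhibit an explicit latent DGP for sharpness, and then combine with \eqref{eq: bounds on the identified term} via a variation-independence (conditional-independence) coupling for $\Theta_{\mathtt{IPT}}$. The only cosmetic difference is in the sharpness construction---the paper sets $\bm{Y}_{i2}(0)=T(\bm{Y}_{i1}(0))$ at the individual level and takes $Y_{i2}(0)=T(Y_{i1}^\ell)$ (resp.\ $T(Y_{i1}^u)$) to hit the endpoints, whereas you fix $\bm{Y}_{i2}(0)$ to be the constant interval $\mathcal{M}_{\mathtt{IPT}}$ with a constant selection $Y_{i2}(0)=\mu$; both choices satisfy Assumption Interval-PT and leave the observables unchanged.
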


Assumption Interval-PT seems a natural extension of the standard parallel trends assumption to the interval data case. However, this identification strategy does not work well in practice. Figure \ref{fig: PT bounds} shows the identified set $\mathcal{M}_{\mathtt{IPT}}$ under Assumption Interval-PT. Although the issue of $\mathcal{M}_{\mathtt{SPT}}$ under Assumption Scalar-PT being excessively wide is resolved, $\mathcal{M}_{\mathtt{IPT}}$ yields an increasing trend in the lower bound for the treated group (NJ), even though the lower bound for the control group (PA) exhibits a decreasing trend.

To resolve this issue, one might be tempted to apply the standard parallel trends assumption separately to the lower and upper bounds, that is, to project the control group's lower (resp. upper) bound trend from $t=1$ to $t=2$ onto the treated group's lower (resp. upper) bound. However, this strategy only works when the pre-treatment intervals are of comparable length. When the lengths of the bounds differ in period $t=1$, treating the upper and lower bounds as independent objects and relying solely on their sums or differences is unappealing, as this discards the information contained in the difference in their length. Moreover, and more importantly from an empirical perspective, this approach does not guarantee that the resulting image remains a valid interval.\footnote{For example, suppose that $\aE{\bm{Y}_{i1} \mid D_i=0} = [0,3]$, $\aE{\bm{Y}_{i2} \mid D_i=0} = [2,3]$, and $\aE{\bm{Y}_{i1} \mid D_i=1} = [0,1]$. For the control group, the lower bound increases by 2, while the upper bound remains unchanged. If we apply these trends to $[0,1]$, the resulting ``interval'' becomes $[2,1]$, which is not a valid interval.}

In sum, researchers need to be cautious of a naive application or extension of the parallel trends assumption to interval outcomes, which may yield uninformative or counterintuitive results. The next section presents an alternative identification strategy, which exhibits desirable properties for the DID analysis with interval outcomes.

\section{Parallel shifts}\label{sec:shift}
Motivated by the previous discussion, we now develop an alternative notion of the parallel trends assumption for interval-valued outcomes. 
We first recall that the interval parallel trends assumption (Assumption Interval-PT) is motivated by extending the standard interpretation of the conventional scalar parallel trends (Assumption Scalar-PT), i.e., the expected change in the bounds of $\E{Y_{it}(0)\mid D_i=1}$ over time is identical to the one of $\E{Y_{it}(0)\mid D_i=0}$. 
However, it should be noted that Assumption Scalar-PT admits another, equally natural representation:
\begin{align*}
    \E{Y_{i2}(0) \mid D_i=1} - \E{Y_{i2}(0) \mid D_i=0}
    =
    \E{Y_{i1}(0) \mid D_i=1} - \E{Y_{i1}(0) \mid D_i=0},
\end{align*}
i.e., the shift in the conditional mean of $Y_{it}(0)$ from the control to the treatment group is identical over time. 
This representation is mathematically equivalent to Assumption Scalar-PT, but motivates an alternative notion to the parallel trends assumption in the interval-valued outcome setting.

In particular, as a counterpart of the right-hand side of the above representation, we consider the linear mapping $S(\cdot)$ such that\begin{align*}
    \aE{\bm{Y}_{i1}(0) \mid D_i=1}= 
    S\left(\aE{\bm{Y}_{i1}(0) \mid D_i=0}\right),
\end{align*}
which is uniquely defined as
\begin{align}
    S(y) = \frac{\mathbb{E}[Y_{i1}^u\mid D_i=1] - \mathbb{E}[Y_{i1}^\ell\mid D_i=1]}{\mathbb{E}[Y_{i1}^u\mid D_i=0] - \mathbb{E}[Y_{i1}^\ell\mid D_i=0]}\left(y-\mathbb{E}[Y_{i1}^\ell\mid D_i=0]\right)+\mathbb{E}[Y_{i1}^\ell\mid D_i=1].\label{eq:S}
\end{align}
We can then formulate an alternative notion of the parallel trends assumption as follows.
\begin{asm-PS}[Parallel shifts]\label{assumption: interval PS}
    Let $S$ be the map defined in \eqref{eq:S}. It holds that
    \begin{align*}
        \aE{\bm{Y}_{i2}(0)\mid D_i=1} = 
        S\left(\aE{\bm{Y}_{i2}(0)\mid D_i=0}\right).
    \end{align*}
\end{asm-PS}

Assumption PS says that the map $S$ transporting the bounds for the control group in period $t=1$ to those for the treatment group in period $t=1$ also transports the former in period $t=2$ to the latter in period $t=2$.\footnote{When the scalar outcome is observable (i.e., $Y^\ell_{it}=Y^u_{it}$), this assumption reduces to the conventional scalar parallel trends in Assumption Scalar-PT by regarding $(\mathbb{E}[Y_{i1}^u\mid D_i=1] - \mathbb{E}[Y_{i1}^\ell\mid D_i=1])/(\mathbb{E}[Y_{i1}^u\mid D_i=0] - \mathbb{E}[Y_{i1}^\ell\mid D_i=0]) = 1$, i.e., by treating the ``lengths'' of the sigletons as being equal. More generally, when the lengths in the pretreatment period are identical, the map $S$ reduces to the bound-by-bound approach discussed in the previous footnote.} Based on this assumption, our main result is presented as follows.
\begin{proposition}[Sharp identified set under parallel shifts]\label{prop: PS bounds}
    Consider the setup outlined in Section \ref{sec:setup}. Suppose Assumption PS holds true and $\P{Y_{i1}^u> Y_{i1}^\ell\mid D_i=0}>0$. Then the sharp identified set of the counterfactual mean $\E{Y_{i2}(0) \mid D_i=1}$ is given by
    \begin{align*}
        \mathcal{M}_{\mathtt{PS}} = S\left(\aE{\bm{Y}_{i2} \mid D_i=0}\right).
    \end{align*}
    Furthermore, the sharp identified set of the ATT $\theta_{\mathtt{ATT}}$ is given by
    \begin{align*}
       \Theta_{\mathtt{PS}} = \aE{\bm{Y}_{i2} \mid D_i=1}\ominus \mathcal{M}_{\mathtt{PS}} \eqqcolon [\vartheta_L, \vartheta_U],
    \end{align*}
    where
    \begin{align*}
        \vartheta_L &\coloneqq \E{Y_{i2}^\ell\mid D_i=1} - \frac{\E{Y_{i1}^u-Y_{i1}^\ell\mid D_i=1}}{\E{Y_{i1}^u-Y_{i1}^\ell\mid D_i=0}}\E{Y_{i2}^u-Y_{i1}^\ell\mid D_i=0}-\E{Y_{i1}^\ell\mid D_i=1},\\
        \vartheta_U &\coloneqq \E{Y_{i2}^u\mid D_i=1} - \frac{\E{Y_{i1}^u-Y_{i1}^\ell\mid D_i=1}}{\E{Y_{i1}^u-Y_{i1}^\ell\mid D_i=0}}\E{Y_{i2}^\ell-Y_{i1}^\ell\mid D_i=0}-\E{Y_{i1}^\ell\mid D_i=1}.
    \end{align*}
\end{proposition}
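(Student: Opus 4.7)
The plan is to establish the identified set for the counterfactual mean $\E{Y_{i2}(0)\mid D_i=1}$ directly from Assumption PS, then read off the ATT set via Minkowski difference, and finally evaluate the endpoints of $S(\aE{\bm{Y}_{i2}\mid D_i=0})$ to recover the explicit $\vartheta_L,\vartheta_U$. The positivity condition $\P{Y_{i1}^u>Y_{i1}^\ell\mid D_i=0}>0$ together with the no-anticipation identities $\bm{Y}_{i1}=\bm{Y}_{i1}(0)$ from \eqref{eq:Y} guarantee that the slope in \eqref{eq:S} is well-defined and determined by observables alone; the same identities yield $\aE{\bm{Y}_{i2}\mid D_i=0}=\aE{\bm{Y}_{i2}(0)\mid D_i=0}$, so $\mathcal{M}_{\mathtt{PS}}=S(\aE{\bm{Y}_{i2}\mid D_i=0})$ is a closed, bounded interval whose endpoints are identified.

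For the ``contained in'' direction, I would note that every admissible selection $Y_{i2}(0)\in M(\bm{Y}_{i2}(0))$ produces $\E{Y_{i2}(0)\mid D_i=1}\in\aE{\bm{Y}_{i2}(0)\mid D_i=1}$, and by Assumption PS this set equals $S(\aE{\bm{Y}_{i2}(0)\mid D_i=0})=\mathcal{M}_{\mathtt{PS}}$. For sharpness, given any target $m\in\mathcal{M}_{\mathtt{PS}}$, I would construct a compatible data-generating process: write $m=S(m_0)$ for some $m_0\in\aE{\bm{Y}_{i2}\mid D_i=0}$; on the control arm choose the selection $Y_{i2}(0)=\alpha Y_{i2}^\ell+(1-\alpha)Y_{i2}^u$ with $\alpha$ matching $m_0$, which is feasible by the argument underlying \eqref{eq: bounds on the identified term}; on the treated arm, where $\bm{Y}_{i2}(0)$ is entirely counterfactual, take it to be degenerate on $\mathcal{M}_{\mathtt{PS}}$ with selection $Y_{i2}(0)\equiv m$. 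This leaves the observable distribution untouched, is compatible with Assumption PS, and delivers the prescribed counterfactual mean.

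The ATT bound follows by combining $\E{Y_{i2}\mid D_i=1}\in\aE{\bm{Y}_{i2}\mid D_i=1}$ from \eqref{eq: bounds on the identified term} with $\E{Y_{i2}(0)\mid D_i=1}\in\mathcal{M}_{\mathtt{PS}}$ through the Minkowski difference; sharpness is preserved because $Y_{i2}(1)$ and the counterfactual $Y_{i2}(0)$ on the treated arm can be selected independently in the above construction, so any pair of marginal means in $\aE{\bm{Y}_{i2}\mid D_i=1}\times\mathcal{M}_{\mathtt{PS}}$ is simultaneously attainable. The explicit $\vartheta_L,\vartheta_U$ then come from the elementary identity $[a,b]\ominus[c,d]=[a-d,b-c]$ applied to $\aE{\bm{Y}_{i2}\mid D_i=1}$ and to the interval $[S(\E{Y_{i2}^\ell\mid D_i=0}),S(\E{Y_{i2}^u\mid D_i=0})]$, obtained by plugging the two endpoints into \eqref{eq:S}; the remaining algebra is routine.

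The main obstacle I anticipate is the sharpness step, namely producing a joint law over the full array of potential intervals and their scalar selections that matches every observable conditional distribution while attaining a prescribed counterfactual mean. The key simplification is that $\bm{Y}_{i2}(0)\mid D_i=1$ is never observed, so we may design it freely subject only to the one-dimensional constraint that Assumption PS imposes on its Aumann mean; this reduces the construction to the same interval-interpolation already used to prove sharpness of \eqref{eq: bounds on the identified term}, and the rest of the proof is bookkeeping.
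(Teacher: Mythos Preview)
Your proposal is correct and follows essentially the same approach as the paper: validity from Assumption PS combined with the no-anticipation identities in \eqref{eq:Y}, sharpness via an explicit DGP on the unobserved $\bm{Y}_{i2}(0)\mid D_i=1$, and sharpness of $\Theta_{\mathtt{PS}}$ by coupling independent selections for $Y_{i2}(1)$ and $Y_{i2}(0)$ on the treated arm. Your sharpness construction (taking $\bm{Y}_{i2}(0)$ on the treated arm to be the constant interval $\mathcal{M}_{\mathtt{PS}}$ with point selection $m$) is a clean variant of the unit-level linear-map construction the paper uses for Proposition~\ref{prop: PT bounds} and refers back to here; both verify Assumption PS via linearity of the Aumann mean, and the remaining endpoint algebra is exactly as you indicate.
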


We treat the outcome as an interval-valued variable, which does not reside in a vector space. Therefore, extrapolation under the interval parallel trends (Assumption Interval PT) is generally different from that under the parallel shifts assumption.

\begin{figure}[ht]
  \centering

  \begin{subfigure}[b]{0.7\textwidth}
    \centering
    \begin{minipage}[b]{0.45\textwidth}
      \centering
      \includegraphics[width=\linewidth]{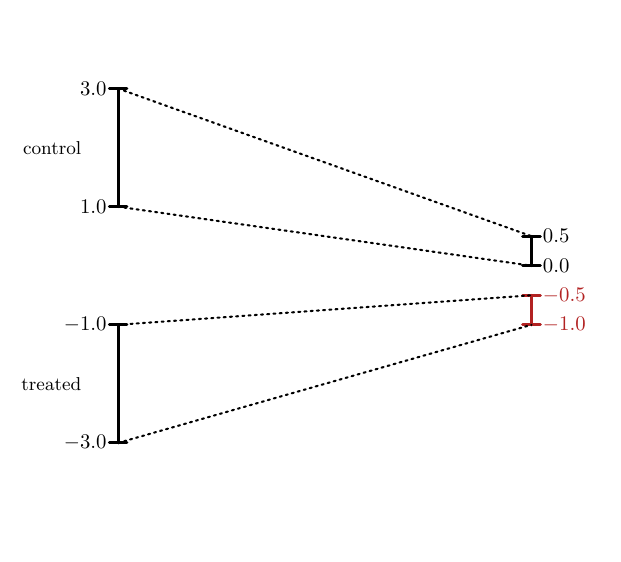}
    \end{minipage}\hfill
    \begin{minipage}[b]{0.45\textwidth}
      \centering
      \includegraphics[width=\linewidth]{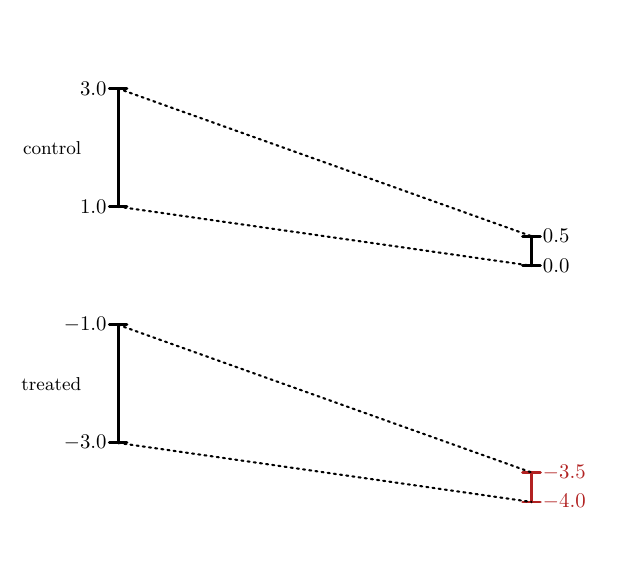}
    \end{minipage}
    \caption{Example 1}
    \label{fig:ex1}
  \end{subfigure}
  \begin{subfigure}[b]{0.7\textwidth}
    \centering
    \begin{minipage}[b]{0.45\textwidth}
      \centering
      \includegraphics[width=\linewidth]{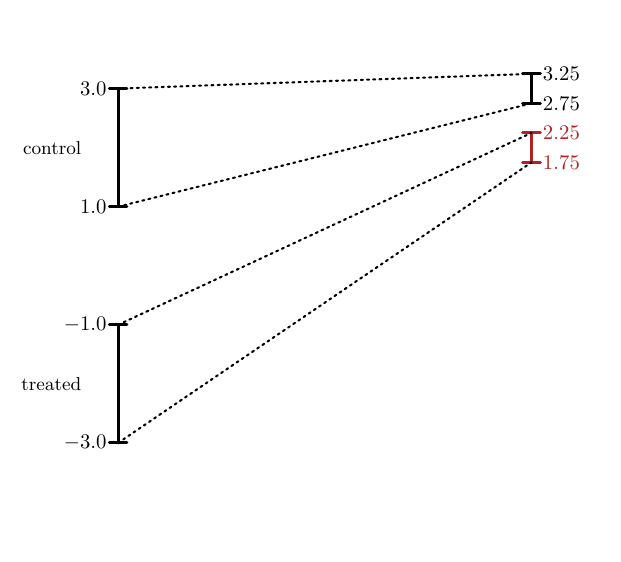}
    \end{minipage}\hfill
    \begin{minipage}[b]{0.45\textwidth}
      \centering
      \includegraphics[width=\linewidth]{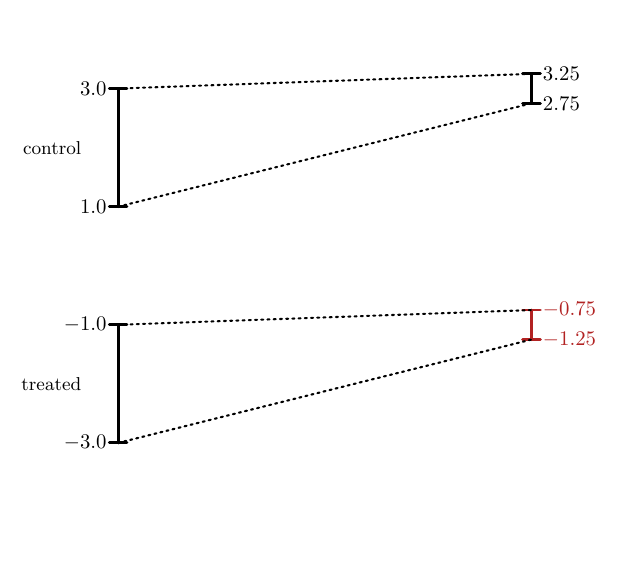}
    \end{minipage}
    \caption{Example 2}
    \label{fig:ex2}
  \end{subfigure}
  \begin{subfigure}[b]{0.7\textwidth}
    \centering
    \begin{minipage}[b]{0.45\textwidth}
      \centering
      \includegraphics[width=\linewidth]{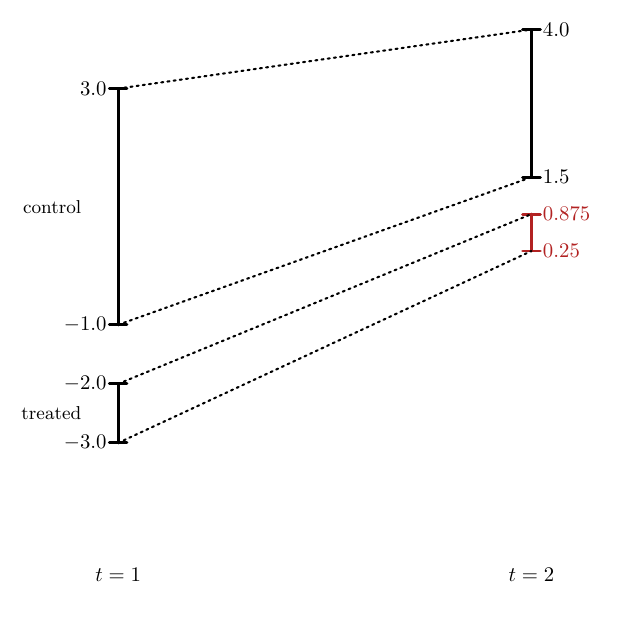}
    \end{minipage}\hfill
    \begin{minipage}[b]{0.45\textwidth}
      \centering
      \includegraphics[width=\linewidth]{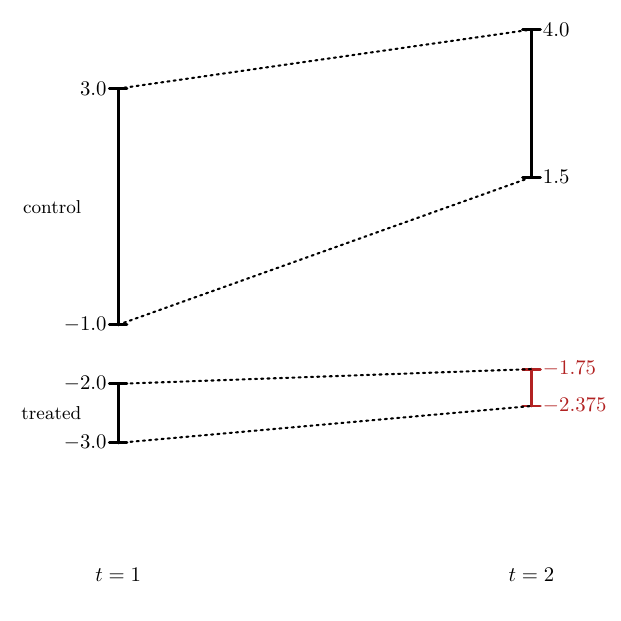}
    \end{minipage}
    \caption{Example 3}
    \label{fig:ex3}
  \end{subfigure}

  \caption{Numerical Examples}
  \label{fig:example}

  \begin{flushleft}
    \footnotesize
    \renewcommand{\baselineskip}{11pt}
    \textbf{Note:} In the left panel, $\mathcal{M}_{\texttt{IPT}}$ under Assumption Interval-PT is shown, 
    while $\mathcal{M}_{\texttt{PS}}$ under Assumption PS is shown in the right panel.
  \end{flushleft}
\end{figure}
To illustrate this point, Figure \ref{fig:example} provides numerical examples. In the left panel, $\mathcal{M}_{\texttt{IPT}}$ under Assumption Interval-PT is shown, while $\mathcal{M}_{\texttt{PS}}$ under Assumption PS is shown in the right panel. In Example 1 (Figure \ref{fig:ex1}), we consider $\aE{\bm{Y}_{i1}\mid D_i=0} = [1.0, 3.0]$, $\aE{\bm{Y}_{i1}\mid D_i=1} = [-3.0, -1.0]$, and $\aE{\bm{Y}_{i2}\mid D_i=0} = [0.0, 0.5]$. Similarly to Figure \ref{fig: PT bounds}, Assumption Interval-PT produces a counterintuitive prediction, showing an upward trend that runs counter to the downward trend of the control group. In contrast, Assumption PS shows a more sensible behavior that respects the downward trend of the control group. In Example 2 (Figure \ref{fig:ex2}), we use the same pre-treatment intervals but modify $\aE{\bm{Y}_{i2}\mid D_i=0}$ to $[2.75, 3.25]$. Again, Assumption Interval-PT produces seemingly erratic behavior: while the upper bound for the control group moves only marginally, that for the treated group shifts more dynamically. This unpleasant feature is once again resolved by Assumption PS. A similar pattern emerges when the lengths of the bounds differ in the pre-treatment period. In Example 3 (Figure \ref{fig:ex3}), we consider the case where $\aE{\bm{Y}_{i1}\mid D_i=0} = [-1.0, 3.0]$, $\aE{\bm{Y}_{i1}\mid D_i=1} = [-3.0, -2.0]$, and $\aE{\bm{Y}_{i2}\mid D_i=0} = [1.5, 4.0]$. Once again, under Assumption Interval-PT, the upper bound moves too dynamically relative to that of the control group, whereas under Assumption PS, the relatively marginal shifts in the upper bound compared to the lower bound are preserved.

\subsection{Property of parallel shifts}
The previous numerical examples suggest the practical advantages of the parallel shifts approach. 
To further motivate this approach, we consider some desirable properties that the identification assumption for the DID with interval outcomes should satisfy. For intervals $\bm{A}_j=[a_j^\ell,a_j^u]$ and $\bm{B}_j=[b_j^\ell,b_j^u]$ ($j=1,2$), consider a map $M:(\bm{A}_1,\bm{A}_2,\bm{B}_1)\mapsto \bm{B}_2$. Let $|\bm{A}|$ be the length of an interval $\bm{A}$. We consider the following conditions as desirable properties of the map $M$.
\begin{cond}[Desirable property of $M$] \quad
\begin{description}
    \item[(i)] $b_2^\ell \leq b_2^u$.
    \item[(ii)] ${|\bm{B}_2|}/{|\bm{B}_1|} = {|\bm{A}_2|}/{|\bm{A}_1|}$.
    \item[(iii)] $b_2^u - b_1^u = \gamma (a_2^u - a_1^u)$ and $b_2^\ell - b_1^\ell = \gamma (a_2^\ell - a_1^\ell)$ for some $\gamma>0$.
\end{description}
\end{cond}

Condition (i) is a weak requirement so that the mapped value $\bm{B}_2$ is a proper interval. Condition (ii) is motivated to exclude bounds like $\mathcal{M}_{\mathtt{SPT}}$ under Assumption Scalar-PT as illustrated in Figure \ref{fig: naive bounds}. It requires that the changes in the size of the identified set evolve in a similar manner across the control and treatment groups, ensuring that both groups display consistent patterns in how their identified sets expand or shrink over time. Condition (iii) is motivated to exclude bounds such as $\mathcal{M}_{\mathtt{IPT}}$ under Assumption Interval-PT. This condition imposes two restrictions. First, $\gamma>0$ requires that the lower and upper bounds move in the same direction for both the control and treatment groups, which excludes the possibility of Figures \ref{fig: PT bounds} and \ref{fig:ex1}. In addition, it requires that movements in the upper and lower bounds be equally respected. For example, it rules out situations in which, relative to the control group, the lower bound for the treatment group moves only half as much, while the upper bound moves twice as much. This excludes the possibility of Figures \ref{fig:ex2} and \ref{fig:ex3}.\footnote{Condition (iii) can also be motivated from the perspective of the random set theory. As discussed in \citet[p.~76]{MoMo18}, every closed interval $\bm{A}$ is uniquely defined by its support function $s_{\bm{A}}(\cdot)$. Then we can see that Condition (iii) is equivalent to $s_{\bm{B}_2}(u) - s_{\bm{B}_1}(u) = \gamma \left\{s_{\bm{A}_2}(u) - s_{\bm{A}_1}(u)\right\}$. In this sense, Condition (iii) can be understood as a parallel transport in a space of the support functions.}

Indeed, in the class of continuous maps $M$ (in the sense that continuous at the boundaries of each element), only the parallel shift map based on $S$ in (\ref{eq:S}) can meet Conditions (i)-(iii).
\begin{lemma}[Desirability of parallel shifts] Suppose that $|\bm{A}_1|,|\bm{B}_1|>0$. A continuous map $M$ satisfies Conditions (i)-(iii) if and only if $M(\bm{A}_1,\bm{A}_2,\bm{B}_1)=[S(a_2^\ell), S(a_2^u)]$, where
\begin{align*}
S(y) = \frac{|\bm{B}_1|}{|\bm{A}_1|}(y - a_1^\ell) + b_1^\ell.
\end{align*}
\end{lemma}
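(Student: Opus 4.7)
I would prove the two directions separately. For the \emph{if} direction, I would substitute $b_2^\ell = S(a_2^\ell)$ and $b_2^u = S(a_2^u)$ and verify each condition by direct computation: (i) follows since $b_2^u - b_2^\ell = (|\bm{B}_1|/|\bm{A}_1|)\,|\bm{A}_2| \geq 0$; (ii) follows from the same identity since $|\bm{B}_2|/|\bm{B}_1| = |\bm{A}_2|/|\bm{A}_1|$; (iii) holds with $\gamma = |\bm{B}_1|/|\bm{A}_1| > 0$ after checking that $S(a_2^u) - b_1^u = (|\bm{B}_1|/|\bm{A}_1|)(a_2^u - a_1^u)$, which reduces to the identity $|\bm{A}_1| + (a_2^u - a_1^u) = a_2^u - a_1^\ell$.

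For the \emph{only if} direction, fix $(\bm{A}_1,\bm{B}_1)$ with $|\bm{A}_1|, |\bm{B}_1| > 0$ and consider $M$ as a function of $\bm{A}_2$. By (iii), for each $\bm{A}_2$ there exists $\gamma = \gamma(\bm{A}_2) > 0$ such that
\begin{align*}
    b_2^u - b_1^u = \gamma(a_2^u - a_1^u), \qquad b_2^\ell - b_1^\ell = \gamma(a_2^\ell - a_1^\ell).
\end{align*}
Subtracting these two equations yields $|\bm{B}_2| - |\bm{B}_1| = \gamma(|\bm{A}_2| - |\bm{A}_1|)$, while (ii) gives $|\bm{B}_2| - |\bm{B}_1| = (|\bm{B}_1|/|\bm{A}_1|)(|\bm{A}_2|-|\bm{A}_1|)$. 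Hence whenever $|\bm{A}_2| \neq |\bm{A}_1|$ we must have $\gamma(\bm{A}_2) = |\bm{B}_1|/|\bm{A}_1|$, so
\begin{align*}
    b_2^\ell = b_1^\ell + \frac{|\bm{B}_1|}{|\bm{A}_1|}(a_2^\ell - a_1^\ell) = S(a_2^\ell), \qquad b_2^u = S(a_2^u).
\end{align*}

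The remaining step is to handle the degenerate case $|\bm{A}_2| = |\bm{A}_1|$, where the argument from (ii) above is vacuous and $\gamma(\bm{A}_2)$ is not pinned down by (ii)--(iii) alone. This is where I expect the main (but minor) obstacle to lie. I would resolve it by the continuity hypothesis on $M$: given any $\bm{A}_2$ with $|\bm{A}_2| = |\bm{A}_1|$, I can approximate it by a sequence $\bm{A}_2^{(n)}$ with $|\bm{A}_2^{(n)}| \neq |\bm{A}_1|$ and $\bm{A}_2^{(n)} \to \bm{A}_2$ in the sense that $a_2^{(n),\ell} \to a_2^\ell$ and $a_2^{(n),u} \to a_2^u$. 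Since the formulas $b_2^\ell = S(a_2^\ell)$ and $b_2^u = S(a_2^u)$ hold along the sequence and $S$ is continuous, continuity of $M$ at $\bm{A}_2$ yields the same identities in the limit. This gives $M(\bm{A}_1,\bm{A}_2,\bm{B}_1) = [S(a_2^\ell), S(a_2^u)]$ for all admissible inputs, completing the proof.
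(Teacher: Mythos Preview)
Your proposal is correct and follows essentially the same approach as the paper's proof: both directions are handled by direct computation, the key step in the \emph{only if} direction is subtracting the two equations from Condition (iii) and combining with Condition (ii) to pin down $\gamma = |\bm{B}_1|/|\bm{A}_1|$ whenever $|\bm{A}_2|\neq|\bm{A}_1|$, and the degenerate case $|\bm{A}_2|=|\bm{A}_1|$ is closed by the same continuity/approximation argument.
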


This lemma provides a theoretical justification for Assumption PS and clarifies the reason behind the favorable behavior observed in Figure \ref{fig:example}.

\subsection{Covariates}
It is straightforward to extend our identification strategy using the parallel shifts to accommodate covariates $X_i\in\mathbb{R}^d$. 
By modifying the parallel trends assumption, the sharp identified sets are characterized in the same manner. 
\begin{asm-CPS}[Conditional parallel shifts]\label{assumption: conditional interval PS}
    It holds that
    \begin{align*}
        \aE{\bm{Y}_{i2}(0)\mid D_i=1, X_i} = 
        S\left(\aE{\bm{Y}_{i2}(0)\mid D_i=0, X_i}; X_i\right),
    \end{align*}
    where
    \begin{align*}
        &S(y; X_i)\\
        &\quad= 
        \frac{\mathbb{E}[Y_{i1}^u\mid D_i=1, X_i] - \mathbb{E}[Y_{i1}^\ell\mid D_i=1, X_i]}{\mathbb{E}[Y_{i1}^u\mid D_i=0, X_i] - \mathbb{E}[Y_{i1}^\ell\mid D_i=0, X_i]}\left(y-\mathbb{E}[Y_{i1}^\ell\mid D_i=0, X_i]\right)+\mathbb{E}[Y_{i1}^\ell\mid D_i=1, X_i].
    \end{align*}
\end{asm-CPS}
\begin{proposition}\label{prop: conditional PS bounds}
    Consider the setup outlined in Section \ref{sec:setup}. Suppose Assumption Conditional-PS holds tule, $\P{Y_{i1}^u> Y_{i1}^\ell\mid D_i=0, X_i}>0$, and $\P{D_i=1 \mid X_i}\in (0,1)$. Then the sharp identified set of the counterfactual mean $\E{Y_{i2}(0) \mid D_i=1}$ is given by
    \begin{align*}
        \mathcal{M}_{\mathtt{CPS}}=
        \aE{S\left(\aE{\bm{Y}_{i2} \mid D_i=0, X_i}; X_i\right) \mid D_i=1}.
    \end{align*}
    Furthermore, the sharp identified set of the ATT $\theta_{\mathtt{ATT}}$ is given by
    \begin{align*}
    \Theta_{\mathtt{CPS}} = \aE{\bm{Y}_{i2} \mid D_i=1}\ominus \mathcal{M}_{\mathtt{CPS}}.
    \end{align*}
\end{proposition}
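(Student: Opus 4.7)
The plan is to mirror the proof of Proposition~\ref{prop: PS bounds}, inserting the law of iterated expectations to lift the conditional identification from $X_i$-fibres to an unconditional statement on $\{D_i=1\}$. The common-support condition $\P{D_i=1\mid X_i}\in(0,1)$ together with $\P{Y_{i1}^u>Y_{i1}^\ell\mid D_i=0, X_i}>0$ ensures that, for $X_i$-almost every value, the random interval $\bm{I}(X_i):=S(\aE{\bm{Y}_{i2}\mid D_i=0, X_i}; X_i)$ is well-defined and non-empty; I will treat $\mathcal{M}_{\mathtt{CPS}}$ as the Aumann mean of this interval-valued random variable given $D_i=1$.

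For \emph{containment}, let $\mu(X_i):=\E{Y_{i2}(0)\mid D_i=1, X_i}$. Since $Y_{i2}(0)\in\bm{Y}_{i2}(0)$ almost surely, $\mu(X_i)$ is a measurable selection of $\aE{\bm{Y}_{i2}(0)\mid D_i=1, X_i}$, and by Assumption Conditional-PS this conditional Aumann mean coincides with $\bm{I}(X_i)$. Iterated expectations then give
\begin{align*}
\E{Y_{i2}(0)\mid D_i=1}=\E{\mu(X_i)\mid D_i=1}\in \aE{\bm{I}(X_i)\mid D_i=1}=\mathcal{M}_{\mathtt{CPS}}.
\end{align*}
For \emph{sharpness} of $\mathcal{M}_{\mathtt{CPS}}$, fix $m\in\mathcal{M}_{\mathtt{CPS}}$ and pick a measurable selection $\tilde\mu(X_i)\in\bm{I}(X_i)$ with $\E{\tilde\mu(X_i)\mid D_i=1}=m$. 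Because $\bm{Y}_{i2}(0)$ is never observed on $\{D_i=1\}$, I would construct a DGP in which, conditional on $D_i=1$ and $X_i$, $\bm{Y}_{i2}(0)$ equals the degenerate interval $\bm{I}(X_i)$ and $Y_{i2}(0)=\tilde\mu(X_i)$. This keeps $Y_{i2}(0)\in\bm{Y}_{i2}(0)$ with probability one, enforces Assumption Conditional-PS by construction, leaves the joint distribution of observables $(D_i, X_i, \bm{Y}_{i1}, \bm{Y}_{i2})$ intact, and yields $\E{Y_{i2}(0)\mid D_i=1}=\E{\tilde\mu(X_i)\mid D_i=1}=m$.

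For the ATT, decompose $\theta_{\mathtt{ATT}} = \E{Y_{i2}\mid D_i=1} - \E{Y_{i2}(0)\mid D_i=1}$. The first term ranges sharply over $\aE{\bm{Y}_{i2}\mid D_i=1}$ by the selection argument in \eqref{eq: bounds on the identified term}, and the second over $\mathcal{M}_{\mathtt{CPS}}$ by the two steps above. Since no assumption links the treated units' selection of $Y_{i2}(1)$ within $\bm{Y}_{i2}(1)$ to the counterfactual block constructed above, the two ranges are attained jointly, giving $\Theta_{\mathtt{CPS}}=\aE{\bm{Y}_{i2}\mid D_i=1}\ominus\mathcal{M}_{\mathtt{CPS}}$. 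The main obstacle I anticipate is the sharpness construction: one must ensure that the combined joint law of the potential outcomes and intervals is internally consistent, compatible with the observed data law, and simultaneously realises both chosen endpoints. This should go through cleanly here because the only degrees of freedom being exploited---$Y_{i2}(0)$ and $\bm{Y}_{i2}(0)$ on $\{D_i=1\}$, and the selection $Y_{i2}(1)\in\bm{Y}_{i2}(1)$---enter none of the observable conditional moments used to define $S(\cdot;X_i)$.
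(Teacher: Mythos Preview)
Your proposal is correct and follows the same route as the paper, which dispatches the result in one sentence (``follows from Proposition~\ref{prop: PS bounds} and the law of iterated expectation''); you simply flesh out the containment and sharpness steps that the paper leaves implicit. One cosmetic remark: calling $\bm I(X_i)$ a ``degenerate interval'' is misleading---you mean that $\bm Y_{i2}(0)$ is set to the \emph{deterministic} (given $X_i$) interval $\bm I(X_i)$, not a singleton.
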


\subsection{Implementation}
Our proposed bounds $\Theta_{\mathtt{PS}}$ are straightforward to implement. Recalling the explicit expression of $\vartheta_L$ and $\vartheta_U$ provided in Proposition \ref{prop: PS bounds}, every conditional expectation can be consistently estimated by its sample analog, which yields a sample analog estimator $\hat{\vartheta}_L$ and $\hat{\vartheta}_U$. Under random sampling and finite second moments of $Y_{it}^\ell$ and $Y_{it}^u$, $\hat{\vartheta}_L$ and $\hat{\vartheta}_U$ are jointly asymptotically normally distributed.

Furthermore, by construction, it holds that $\hat{\vartheta}_U \geq \hat{\vartheta}_L$ with probability one. Hence, the superconsistency condition of \citet[Assumption 3]{Stoye:2009} is satisfied in our case, allowing us to conduct statistical inference based on the confidence interval of \cite{Imbens_Manski:2004}. When discrete covariates are available, the same procedure applies. When the covariates are continuously distributed, some form of (semi-)parametric modeling becomes necessary; see \cite{Chang:2020} and \cite{SZ20} for related discussion. As the endpoints of $\Theta_{\mathtt{CPS}}$ have simple forms, we conjecture that a similar procedure is applicable in our setting as well.

As in the scalar-outcome case, plausibility of the parallel shifts assumption can be guided by goodness-of-fit in pretreatment periods, provided that multiple pretreatment periods are available. In the interval-valued setting, one can examine if the predicted lower and upper bounds track their realized counterparts.\footnote{Although this approach is widely used, formally testing such ``pre-trends'' can distort the main test as emphasized by \cite{Roth:2022}. As an alternative, bounded-variation-type assumptions introduced by \cite{Manski_Pepper:2018} and extended by \cite{Rambachan_Roth:2023} can also be incorporated into the interval-outcome framework. We view such extensions as promising avenues for future work on DID with interval-valued outcomes.}

\section{Card and Krueger (1994)}\label{sec:CK}
\begin{figure}[t]
    \begin{center}
    \begin{subfigure}[b]{0.475\textwidth}
        \centering
        \includegraphics[width=\linewidth]{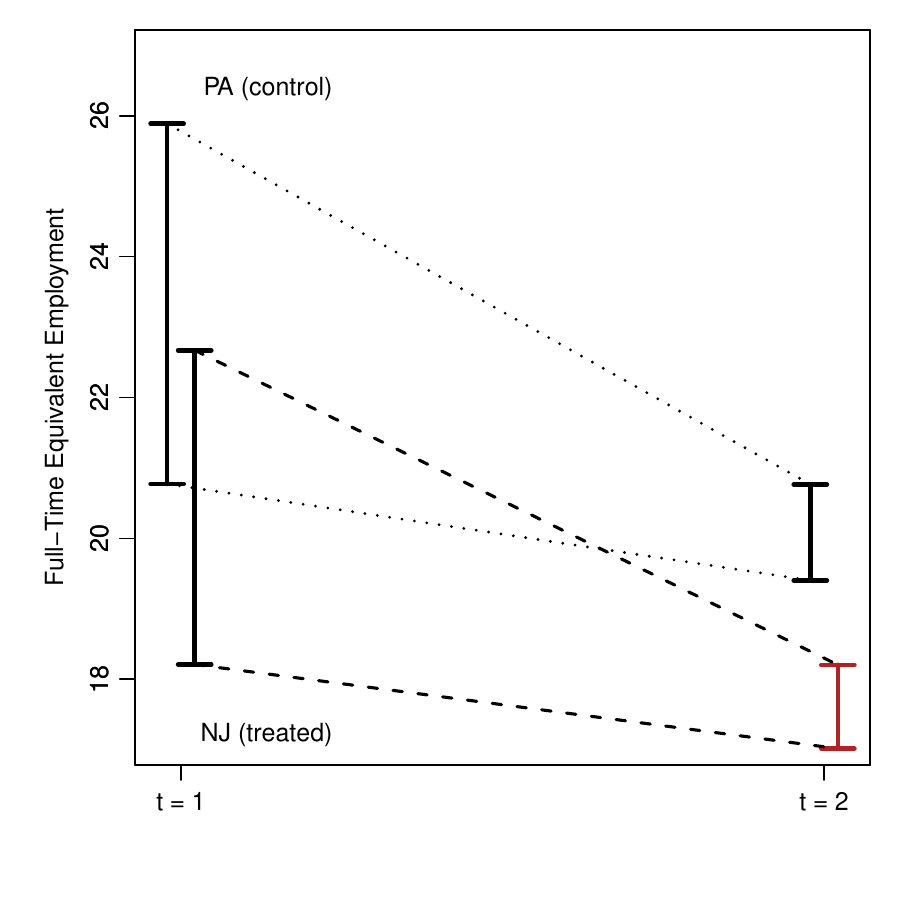}
        \caption{$\mathcal{M}_{\mathtt{PS}}$}
        \label{fig: PS bounds}
    \end{subfigure}
    \hfill
    \begin{subfigure}[b]{0.475\textwidth}
        \centering
        \includegraphics[width=\linewidth]{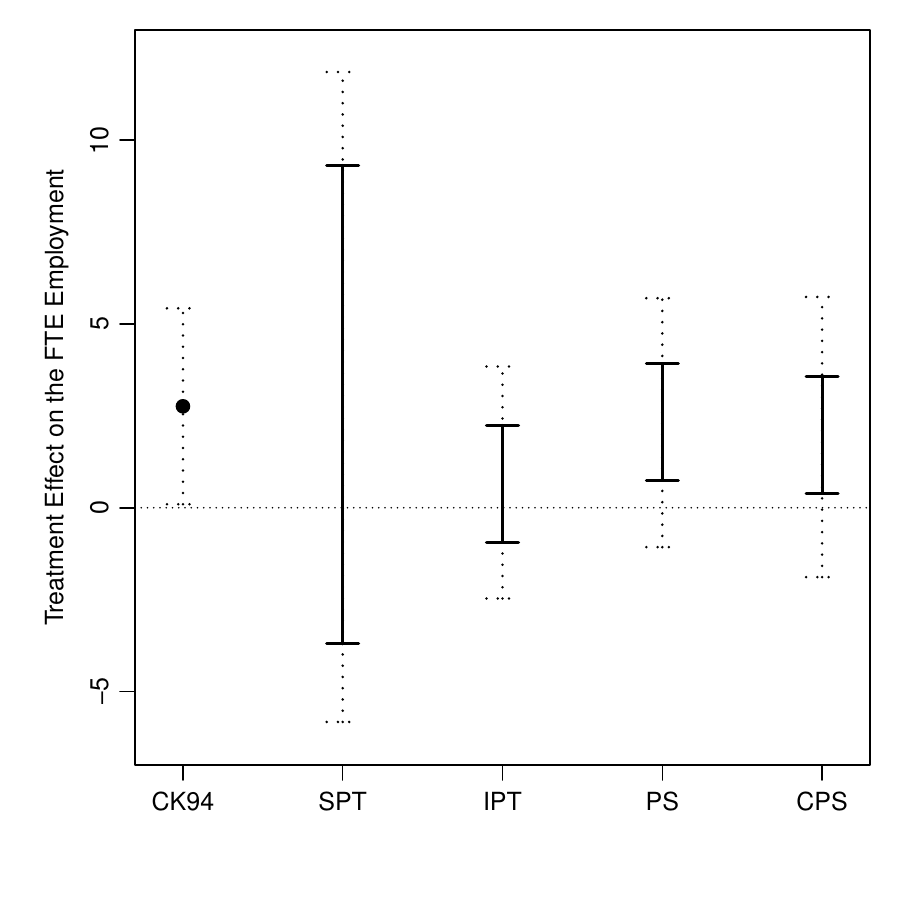}
        \caption{Bounds on $\theta_{\mathtt{ATT}}$}
        \label{fig: bounds on att}
    \end{subfigure}
    \caption{Empirical results from the reanalysis of \cite{Card_Krueger:1994}}
    \label{fig: empirical results}
    \end{center}

    {\footnotesize
    \noindent\begin{minipage}{\textwidth}
    \renewcommand{\baselineskip}{11pt}
    \textbf{Note:} In Panel (B), the solid lines represent the bounds, while the dotted lines depict the 95\% confidence intervals. For the bounds, we employ the \cite{Imbens_Manski:2004} and \cite{Stoye:2009} confidence intervals, using plug-in analytical variance estimators.
    In CPS, we follow \cite{Card_Krueger:1994} and use the restaurant chain and an indicator for whether the store is company-owned as covariates.
    \end{minipage}
    }
\end{figure}

We are now in a position to reexamine the \cite{Card_Krueger:1994} minimum wage study. We begin by estimating the counterfactual interval $\mathcal{M}_{\mathtt{PS}}$ identified under Assumption PS. Figure \ref{fig: PS bounds} displays the estimated intervals. We can see that the identified set is tighter than $\mathcal{M}_{\mathtt{SPT}}$ in Figure \ref{fig: naive bounds}. Furthermore, the trends in the boundaries of the identified set for NJ exhibit a similar pattern to that of PA. In particular, the lower bound shows a clear decreasing trend, in contrast to the behavior of $\mathcal{M}_{\mathtt{IPT}}$ in Figure \ref{fig: PT bounds}. These findings illustrate the empirical attractiveness of our proposed notion of parallel shifts.

Figure \ref{fig: bounds on att} summarizes the bounds on the ATT under the various assumptions. CK94 reproduces the estimate of \cite{Card_Krueger:1994}. SPT presents the identified set under Assumption Scalar-PT, and we again confirm that this set is too wide to be informative. The bounds under Assumption Interval-PT, reported as IPT, address this issue, although the identified set still includes zero. In contrast, the identified sets under Assumptions PS and CPS both suggest positive treatment effects, although in every case the estimates are statistically indistinguishable from zero.

The results based on Assumptions PS and CPS are well aligned with the later study of \cite{Card_Krueger:2000}, which reanalyzes the original 1994 study using an administrative data free from survey or measurement error. They find smaller positive point estimates, though these are not statistically significant. 
The bounds and associated confidence intervals are consistent with this finding. In this sense, our results reaffirm the original findings reported in their studies (\citealp{Card_Krueger:1994, Card_Krueger:2000}) and further highlight the potential usefulness of our proposed procedure.

\section{Conclusion}
This paper has extended the scope of the difference-in-differences methodology toward the case of interval outcomes. Our analyses on sharp identified sets for the average treatment effect on the treated using the scalar- and interval-parallel trends assumptions revealed some cautions for practitioners, i.e., the resulting identified sets can be uninformative or counterintuitive. We then propose an alternative parallel shifts assumption to overcome these drawbacks. An empirical example based on the \cite{Card_Krueger:1994} minimum wage study convincingly endorses its practical value. 
Although our focus has been on the canonical DID setting, it is of interest to study analogous issues of interval outcomes in more general settings, such as staggered DID \citep{CS21}, fuzzy DID \citep{deChaisemartin:2018}, and bounded-variation type assumptions \citep{Manski_Pepper:2018, Rambachan_Roth:2023}. Extending our framework to these cases would further enhance its practical applicability.

\newpage
\appendix
\section{Proofs}
\noindent Let $A\oplus B = \{a+b: a\in A\,\text{ and }\, b\in B\}$ be the Minkowski sum of intervals $A$ and $B$.
\begin{proof}[Proof of Proposition \ref{prop: naive bounds}]
    We begin with the proof of the validity and sharpness of $\mathcal{M}_{\mathtt{SPT}}$.
    By \eqref{eq:Y} and Assumption Scalar-PT,
    \begin{align*}
        \E{Y_{i2}(0) \mid D_i=1} &= 
        \E{Y_{i1}(0)\mid D_i=1} + \E{Y_{i2}(0) - Y_{i1}(0)\mid D_i=0}\\
        &=
        \E{Y_{i1}\mid D_i=1} + \left(\E{Y_{i2}\mid D_i=0} - \E{Y_{i1}\mid D_i=0}\right).
    \end{align*}
    Recalling that $Y_{i1}\in\bm{Y}_{i1}$, we have $\E{Y_{i1} \mid D_i=1}\in \aE{\bm{Y}_{i1}\mid D_i=1}$. We have also seen that $\E{Y_{it} \mid D_i=0}\in\aE{\bm{Y}_{it}\mid D_i=0}$. Therefore, we have
    \begin{align*}
        \E{Y_{i2}(0) \mid D_i=1} &\in
        \aE{\bm{Y}_{i1}\mid D_i=1} \oplus 
        \left(\aE{\bm{Y}_{i2}\mid D_i=0}\ominus\aE{\bm{Y}_{i1}\mid D_i=0}\right)\\
        &=
        \bigg[
        \mathbb{E}[Y_{i1}^\ell\mid D_i=1] + \left(\mathbb{E}[Y_{i2}^\ell\mid D_i=0] - \mathbb{E}[Y_{i1}^u\mid D_i=0]\right),\\
        &\qquad\quad 
        \mathbb{E}[Y_{i1}^u\mid D_i=1] + \left(\mathbb{E}[Y_{i2}^u\mid D_i=0] - \mathbb{E}[Y_{i1}^\ell\mid D_i=0]\right)
        \bigg] = \mathcal{M}_{\mathtt{SPT}}.
    \end{align*}
    To see that the lower bound is attainable, define the following data-generating process (DGP) of potential outcomes. 
    For units with $D_i=0$, set $Y_{i1}(0) = Y_{i1}^u$ and $Y_{i2}(0) = Y_{i2}^\ell$ so that $Y_{it}(0)\in\bm{Y}_{it}$ and $\mathbb{E}[Y_{i2}(0)-Y_{i1}(0)\mid D_i=0] = \mathbb{E}[Y_{i2}^\ell - Y_{i1}^u \mid D_i=0]$.
    For units with $D_i=1$, set $Y_{i1}(0) = Y_{i1}^\ell$ and $Y_{i2}(0) = Y_{i1}^\ell + \mathbb{E}[Y_{i2}^\ell - Y_{i1}^u \mid D_i=0]$.
    Then Assumption Scalar-PT holds. 
    Furthermore, under this construction, the lower bound of $\mathcal{M}_{\mathtt{SPT}}$ is indeed attained.    
    A symmetric argument shows that the upper bound is attainable. The interior points are attained by considering the convex mixtures of these two DGPs. 
    Hence, $ = \mathcal{M}_{\mathtt{SPT}}$ is the sharp identified set of $\E{Y_{i2}(0) \mid D_i=1}$.
    
    Furthermore, in the main text, we have seen that $\E{Y_{i2} \mid D_i=1}\in\aE{\bm{Y}_{i2} \mid D_i=1}$, implying the validity of the bounds $\aE{\bm{Y}_{i2} \mid D_i=1} \ominus \mathcal{M}_{\mathtt{SPT}} = \Theta_{\mathtt{SPT}}$.
    Choose any DGP of $(Y_{i1}(0), Y_{i2}(0), D_i)$ attaining a point $\theta_1$ in $\mathcal{M}_{\mathtt{SPT}}$ (such that it is observationally equivalent to the data).
    Likewise, choose any DGP of $(Y_{i2}(1), D_i)$ attaining a point $\theta_2$ in $\aE{\bm{Y}_{i2} \mid D_i=1}$.
    Then, we can combine these two marginal DGPs into a joint DGP in which $(Y_{i1}(0), Y_{i2}(0))$ and $Y_{i2}(1)$ are independent conditional on $D_i=1$. Under this DGP, the point $\theta_2 - \theta_1 (\in \aE{\bm{Y}_{i2} \mid D_i=1}\ominus \mathcal{M}_{\mathtt{SPT}})$ is attained.
    As $\theta_1$ and $\theta_2$ are arbitrary, the sharpness is proved.
\end{proof}

\begin{proof}[Proof of Proposition \ref{prop: PT bounds}]
    The map $T$ in \eqref{eq: PT map} is well-defined by $\P{Y_{i1}^u > Y_{i1}^\ell \mid D_i=0}>0$. By \eqref{eq:Y} and Assumption Interval-PT, the validity of the bounds follows immediately.
    To see the attainability of the lower bound, define the following DGP. For $D_i = 0$, set arbitrarily $Y_{it}(0) \in \bm{Y}_{it}$. 
    For $D_i=1$, set $Y_{i1}(0) = Y_{i1}^\ell \in \bm{Y}_{i1}$, $Y_{i2}(0) = T(Y_{i1}^\ell)$, and $\bm{Y}_{i2}(0) = T(\bm{Y}_{i1}(0))$.
    The Aumann mean and the map $T$ are both linear in the boundary points of the interval, we have that $\aE{\bm{Y}_{i2}(0) \mid D_i=1} = \aE{T(\bm{Y}_{i1}(0)) \mid D_i=1} = T(\aE{\bm{Y}_{i1}(0) \mid D_i=1})$, that is, Assumption Interval-PT is satisfied.
    Moreover, using linearity again, we have that $\E{Y_{i2}(0) \mid D_i=1} = \mathbb{E}[T(Y_{i1}^\ell) \mid D_i=1] = T(\mathbb{E}[Y_{i1}^\ell \mid D_i=1])$, which equals the lower bound of $\mathcal{M}_{\mathtt{IPT}}$.
    The sharpness of $\mathcal{M}_{\mathtt{IPT}}$ then follows by a symmetric argument, as in the proof of the previous proposition.
    Likewise, the sharpness of $\Theta_{\mathtt{IPT}}$ follows.
\end{proof}

\begin{proof}[Proof of Proposition \ref{prop: PS bounds}]
    The proof is similar to that of Proposition \ref{prop: PT bounds}.
\end{proof}

\begin{proof}[Proof of Proposition \ref{prop: conditional PS bounds}]
    The statement follows from Proposition \ref{prop: PS bounds} and the law of iterated expectation.
\end{proof}

\begin{proof}[Proof of Lemma]
    ($\Leftarrow$): Note that $a_2^\ell \leq a_2^u$ by definition.
    Then we have that $b_2^\ell = S(a_2^\ell) \leq S(a_2^u) = b_2^u$, showing that Condition (i) is satisfied.
    Next, we can compute that $|\bm{B}_2| = S(a_2^u) - S(a_2^\ell) = |\bm{B}_1|\cdot|\bm{A}_2|/|\bm{A}_1|$, which implies Condition (ii).
    Finally, note that
    \begin{align*}
        b_2^u - b_1^u &= S(a_2^u) - b_1^u 
        = \frac{|\bm{B}_1|}{|\bm{A}_1|} \left(a_2^u - a_1^\ell\right) + b_1^\ell - b_1^u\\
        &= \frac{|\bm{B}_1|}{|\bm{A}_1|} \left(a_2^u - a_1^u + |\bm{A}_1|\right) - |\bm{B}_1| = \frac{|\bm{B}_1|}{|\bm{A}_1|} \left(a_2^u - a_1^u\right).
    \end{align*}
    A similar manipulation shows that $b_2^\ell - b_1^\ell = ({|\bm{B}_1|}/{|\bm{A}_1|}) (a_2^\ell - a_1^\ell)$. Hence, Condition (iii) is satisfied with $\gamma = {|\bm{B}_1|}/{|\bm{A}_1|}$.
    
    \noindent ($\Rightarrow$): We begin with the case where $|\bm{A}_1|\neq |\bm{A}_2|$. Conditions (i) and (iii) imply that $|\bm{B}_2|-|\bm{B}_1| = \gamma (|\bm{A}_2|-|\bm{A}_1|)$, and thus
    \begin{align*}
        \frac{|\bm{B}_2|}{|\bm{B}_1|}-1= \gamma \frac{|\bm{A}_2| - |\bm{A}_1|}{|\bm{B}_1|}.
    \end{align*}
    Then, by Condition (ii), we have that
    \begin{align*}
        \frac{|\bm{A}_2|-|\bm{A}_1|}{|\bm{A}_1|} = \gamma \frac{|\bm{A}_2| - |\bm{A}_1|}{|\bm{B}_1|}.
    \end{align*}
    With the assumption $|\bm{A}_1|\neq |\bm{A}_2|$, we obtain that $\gamma = |\bm{B}_1|/|\bm{A}_1|\eqqcolon \gamma_1$, so that
    \begin{align*}
        b_2^\ell = \gamma_1 (a_2^\ell - a_1^\ell) + b_1^\ell,
    \end{align*}
    and
    \begin{align*}
        b_2^u = \gamma_1 (a_2^u - a_1^u) + b_1^u
        = \gamma_1(a_2^u - a_1^\ell) + b_1^\ell + \underbrace{|\bm{B}_1| - \gamma_1 |\bm{A}_1|}_{=0} 
        = \gamma_1(a_2^u - a_1^\ell) + b_1^\ell.
    \end{align*}
    Hence, the map is given by $S(x) = |\bm{B}_1|/|\bm{A}_1|(x - a_1^\ell) + b_1^\ell$.
    When $|\bm{A}_1|=|\bm{A}_2|$, let $\bm{A}_2^{(m)}\coloneqq [a_2^\ell-1/m, a_2^u+1/m]$ and note that $\bm{B}_2^{(m)} \coloneqq M(\bm{A}_1,\bm{A}_2^{(m)},\bm{B}_1)$ satisfies $\bm{B}_2^{(m)} = S(\bm{A}_2^{(m)})$. Then we can see that $\bm{B}_2 = \lim_{n\to\infty}\bm{B}_2^{(m)} = S(\lim_{n\to\infty} \bm{A}_2^{(m)})=S(\bm{A}_2)$, where the first equality uses the continuity of $M$, and the lemma is proved.
\end{proof}

\bibliographystyle{apalike} 
\bibliography{refs}

\end{document}